\newcommand{\rulesep}{\unskip\ \vrule\ }
\newcommand{\rev}[1]{\textcolor{black}{#1}}
\newcommand{\revf}[1]{\textcolor{black}{#1}}
\theoremstyle{remark}
\theoremstyle{definition}
\newtheorem{defn}{Definition}
\theoremstyle{plain}
\newtheorem{thm}{Theorem}
\newtheorem{prop}{Proposition}  
\newtheorem{example}{Example}
\newcommand{\R}{\mathbb{R}}
\renewcommand{\S}{\mathit{S}}
\title{\LARGE \bf
Onboard Safety Guarantees for Racing Drones: \\ 
High-speed Geofencing with Control Barrier Functions
}
\author{Andrew Singletary, Aiden Swann, Yuxiao Chen, and Aaron D. Ames
\thanks{
Andrew Singletary, Aiden Swann, Yuxiao Chen, and Aaron D. Ames are with Department of Mechanical and Civil Engineering,
        California Institute of Technology, Pasadena CA 91125, U.S.A. Email addresses:
		{\tt \small \{asinglet, aswann, ychen, ames\}@caltech.edu}. This work is supported by AeroVironment and NSF CPS award \#1932091.
}
}
\begin{document}

\maketitle
\thispagestyle{empty}
\pagestyle{empty}

\begin{abstract}
This paper details the theory and implementation behind practically ensuring safety of remotely piloted racing drones. We demonstrate robust and practical safety guarantees on a 7" racing drone at speeds exceeding 100 km/h, utilizing only online computations on a 10 gram micro-controller. To achieve this goal, we utilize the framework of control barrier functions (CBFs) which give guaranteed safety encoded as forward set invariance.  To make this methodology practically applicable, we present an implicitly defined CBF which leverages backup controllers to enable gradient-free evaluations that ensure safety.  The method applied to hardware results in smooth, minimally conservative alterations of the pilots' desired inputs, enabling them to push the limits of their drone without fear of crashing. Moreover, the method works in conjunction with the preexisting flight controller, resulting in unaltered flight when there are no nearby safety risks. Additional benefits include safety and stability of the drone when losing line-of-sight or in the event of radio failure.
\end{abstract}

\section{INTRODUCTION}
\label{sec:introduction}
As hobby drones become more and more capable, the interest in drone racing continues to increase. Transparency Market Research predicts the drone racing industry to reach a valuation of \$786m by 2027. And while progress is being made in autonomous racing drones \cite{foehn2020alphapilot,jung2018perception}, it is still an area dominated by humans \cite{delmerico2019we}.  The goal of this paper is to study racing drones in the theoretic context of achieving safe flight through minimal pilot interventions.  Importantly, we demonstrate this practically in a realistic scenario: high speed drone flight. 

Safety of small aerial vehicles is a heavily researched area.  These works generally focus on safely planning trajectories rather than intervening along a desired trajectory.  In this setting, \cite{tordesillas2019real} accounts for the low computational ability of drones, as well as the slow updates of mapping software, in their design of a planner for quick flight in unknown environments using motion primitives. While this and similar planners \cite{tordesillas2019faster} have demonstrated results in unknown environments, they have not been demonstrated at the high speeds seen in drone racing. This is true for nearly all vision planners \cite{kaufmann2018deep,kaufmann2019beauty}, as the localization and mapping algorithms simply cannot keep up with speeds that human operators are capable of. \rev{Moreover, for known environments, reinforcement learning has been utilized to plan highly dynamic trajectories at speeds exceeding 60 km/h \cite{song2021autonomous}, but attempting to track these trajectories on hardware results in large tracking errors.} This points to the difficulty of adapting existing strategies to ensure safety with human operators in the loop. 

While most drone racing research focuses on autonomy \cite{moon2019challenges}, this work departs from this paradigm with the goal of giving the human operators as much freedom and control authority as possible subject to safety constraints.
In particular, we seek to guarantee safety of the drone in known environments (with realistic sensing and actuation constraints) through minimal operator control intervention.  This can be viewed as a ``safety filter'' that allows the pilot to aggressively operate the drone in a safe fashion---even at high speeds and when performing aggressive maneuvers. 

\begin{figure}
    \centering
    \includegraphics[width=1\columnwidth]{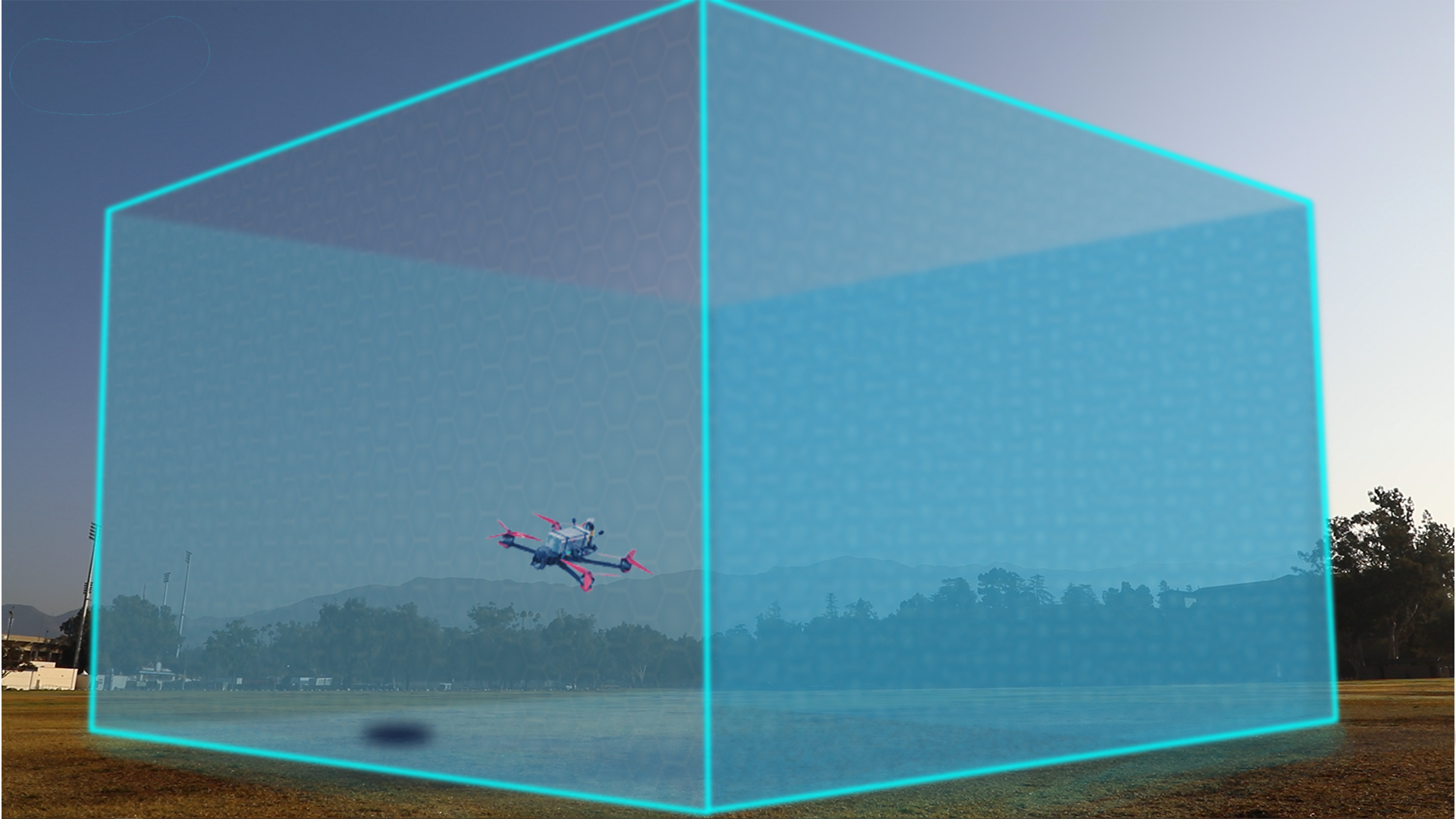}
    \caption{An illustration of geofencing---the pilot can move freely in the boxed region but cannot leave this volume---which is enforced during high speed flight (upwards of 100 km/h) through the methodology presented in this paper.
 }
    \label{fig:intro}
\end{figure}

\rev{While the concept of minimally invasive, shared control systems is less studied than its fully autonomous counterpart, several approaches exist in the literature. In \cite{broad2019highly}, the authors propose a sampling-based MPC approach that generates many possible safe trajectories at each time-step, and chooses the one closest to the user's desired input subject to safety conditions. Another MPC-based approach is demonstrated in \cite{tearle2021predictive}, which uses learning to minimize conservatism, but neither of these approaches are able to run in real-time on a microcontroller. In the context of geofencing, \cite{zhang2017model} presents an MPC-based approach, but it lacks the guaranteed feasibility of \cite{broad2019highly}. In \cite{hermand2018constrained}, an Explicit Reference Governor (ERG) scheme modifies the derivative of the applied inputs subject to safety constraints utilizing Lyapunov functions. While this approach is optimization-free and could be implemented online, it is difficult to find the required upper-bound of the Lyapunov function that guarantees constraint satisfaction.}

\rev{Control barrier functions \cite{ames2017cbf}, have recently been proven} to provide an effective means of enforcing safety on a wide variety of robotic systems \cite{ames2019control}, (including drones \cite{wang2018safe}).
However, state-of-the-art barrier function implementations are not particularly well-suited for high-speed drone flight as they typically rely on simplified models that become invalid at high speeds, or have no guarantees of feasibility.  The backup-set approach can be used to enforce safety of quadrotors on the full-order dynamics \cite{chen2020guaranteed}, but requires more onboard computational power than is available on small racing drones.

The goal of this paper is to provide formal guarantees of safe flight via minimal operator intervention for high speed drones, utilizing only onboard sensing and computation. 
To achieve this goal, we leverage the \emph{implicit control barrier functions} framed in the context of backup controllers.  
The concept of using implicit control barrier functions to guarantee safety in a more computationally efficient manner, without the use of derivatives, was first introduced in \cite{singletary2020safety}. \rev{While the method presented in this paper is based on our previous work \cite{singletary2020safety}, we have made significant improvements on the algorithm to enhance its performance and make it more applicable to hardware implementation. The previous algorithm had a tendency to get stuck near the boundary of the safe set if model-mismatch or large disturbances were present, and thus it was tested only in simulation. The first contribution of this paper is a more refined safety filter that avoids issues near the set boundaries and interfaces with any off-the-shelf flight controller, dramatically extending the \rev{practicality} of the method while still retaining the formal safety guarantees.
The second contribution is our extensive, real-world testing of this safety filter. To this end, \emph{we experimentally realize control barrier functions to enforce geofencing (cf. Fig. \ref{fig:intro}) at speeds of 100 km/h.}}


\section{Preliminaries}
\label{sec:motivation}
\subsection{Safe flight and set invariance}

To prevent crashing and guarantee safety, our goal is to ensure that the system's state $x(t)$ stays in a predefined safe set $\S$, such as the box shown in Figure \ref{fig:intro} typically seen in geofencing. Before formalizing this notion of safety, we must first introduce some notation and definitions.

We consider a nonlinear control-affine dynamic system:
\begin{align} \label{eqn:dyn}
    \dot{x} = f(x) + g(x) u,
\end{align}
where $x \in \mathbb{R}^{n}$ is the state, $u \in U$ is the control input, to be chosen from an admissible input set $\mathit{U} \subseteq 
\mathbb{R}^{m}$, while $f: \mathbb{R}^{n} \to \mathbb{R}^{n}$ and $g: \mathbb{R}^{n} \to \mathbb{R}^{n \times m}$ describe the dynamics.
Given a controller $u = \rho(x)$ and an initial condition ${x(t_0) = x_{0} \in \mathbb{R}^{n}}$, the solution to this system is given by the flow map ${x(t)=\Phi_{\rho}(x_{0},t)}$, $t \geq t_0$, i.e., $\Phi_{\rho}$ takes the initial state and time of flow and maps to the future state of the closed-loop system under $\rho$. 

The goal of forcing the system to remain in $\S$ at all time can be achieved through the concept of set invariance. 

\begin{defn}
A set $\mathit{S}$ is called \textit{invariant} if the system state stays in the set indefinitely, i.e. $\forall t \geq t_0,\ x(t) \in \mathit{S}$. $\mathit{S}$ is \textit{control invariant} if there exists a control law $k:\mathbb{R}^n\to\mathit{U}$ mapping $x$ to the admissible input set $\textit{U}$ such that the system is invariant under $k$, i.e. $\forall t \geq 0, \forall x_0\in\mathit{S},\ \Phi_{k}(x_{0},t)\in\mathit{S}$.
\end{defn}

\subsection{\rev{Control barrier functions}}

\rev{Given an invariant set $S$, a control barrier function can be formulated and used to enforce set invariance at each time-step.}

\begin{defn}[\hspace{-.1pt}\cite{ames2017cbf}]
\label{def:cbf}
Let $\S \subset \R^n$ be the set defined by a continuously differentiable function $h: \R^n \to \R$:
\begin{eqnarray}
\S & = & \{ x \in \R^n ~ : ~ h(x) \geq 0 \} , \nonumber\\
\partial \S & = & \{ x \in \R^n ~ : ~ h(x) = 0 \}, \nonumber\\
\mathrm{Int}(\S) & = & \{ x \in \R^n ~ : ~ h(x) > 0 \}. \nonumber
\end{eqnarray}
Then $h$ is a \textit{control barrier function (CBF)} if $\frac{\partial h}{\partial x}\neq 0$ for all $x\in\partial \S$ and there exists an \emph{extended class $\mathit{K}$ function} (\cite[Definition 2]{ames2017cbf})
$\alpha$ such that for all 
$ x \in \S$, $\exists u$ s.t. 
\begin{align}
\label{eqn:cbf:definition}
\underbrace{\frac{\partial h}{\partial x} f(x) + \frac{\partial h}{\partial x} g(x) u}_{\dot{h}(x,u)} \geq - \alpha(h(x)).
\end{align}
\end{defn}

\rev{By choosing an input $u$ such that \eqref{eqn:cbf:definition} holds, the invariance of $S$ is guaranteed \cite{ames2017cbf}. Therefore, given an}
input from the flight operator, or another preexisting controller, $u_{\textrm{des}}(x,t)$, we can pick the closest input $u^*(x,t)$ to $u_{\textrm{des}}(x,t)$ that guarantees safety by solving the following quadratic program (QP):
\begin{align}
\label{eqn:QPsimple}
u^*(x,t) = \underset{u \in \mathit{U}}{\operatorname{argmin}} & ~  ~ \| u - u_{\rm des}(x,t) \|^2 \\
\mathrm{s.t.} & ~  ~ \frac{\partial h}{\partial x} f(x) + \frac{\partial h}{\partial x} g(x) u \geq - \alpha (h(x)). \nonumber
\end{align}
\rev{As mentioned in the introduction, CBFs work very well for maintaining set invariance. However, they are often restricted in their usage due to the difficulty in constructing an invariant set $S$.} Several methods for computing such sets exist \cite{aubin2009viability,mitchell2005time}, but almost all methods for nonlinear systems suffer heavily from the curse of dimensionality and are inapplicable to drones without model simplification.




\subsection{\rev{Backup controllers to construct invariant sets}}

The backup set approach \cite{gurriet2018online} was recently proposed as a remedy to the difficulty of obtaining control invariant sets. This approach \revf{assumes} the knowledge of a very small backup set $\S_B$ that is invariant under some backup controller $\pi(x)$. 
This can be, for example, a set describing the hovering maneuver for the drone and a controller commanding the drone to stop and hover in place; we will give more details on the backup set and controller selection later.

With the small backup set defined, a point $x_0 \in S$ is in the control invariant set $S_I$ if the system is able to reach the backup set $\S_B$ over some time $T$ without leaving the safe set $S$ under the predefined backup controller \cite{gurriet2018online}. This makes the description of our control invariant set:
\begin{equation} \label{eqn:backup_SI}
    S_I = \left\{ 
    x_0 \in \R^n \left| \underset{\forall \tau \in [0,T]}{\Phi_{\pi}(x_0,\tau)} \in \S \wedge \Phi_{\pi}(x_0,T) \in S_B\right.
    \right \}
\end{equation}
where $\Phi_{\pi}(x_0,t)$ is the flow of the system from initial condition $x_0$ after time $t$, governed by $\dot{x} = f(x) + g(x) \pi(x) = f_{\pi}(x)$, where $\pi(x)$ is the backup control policy. 

The concept of obtaining an invariant set $\S_I$ from $\S$ using backup controllers is illustrated in Figure \ref{fig:backup_set_illustrated}. 

\begin{figure}
    \centering
    \includegraphics[trim={0 0 20cm 0},clip,width=.9\columnwidth]{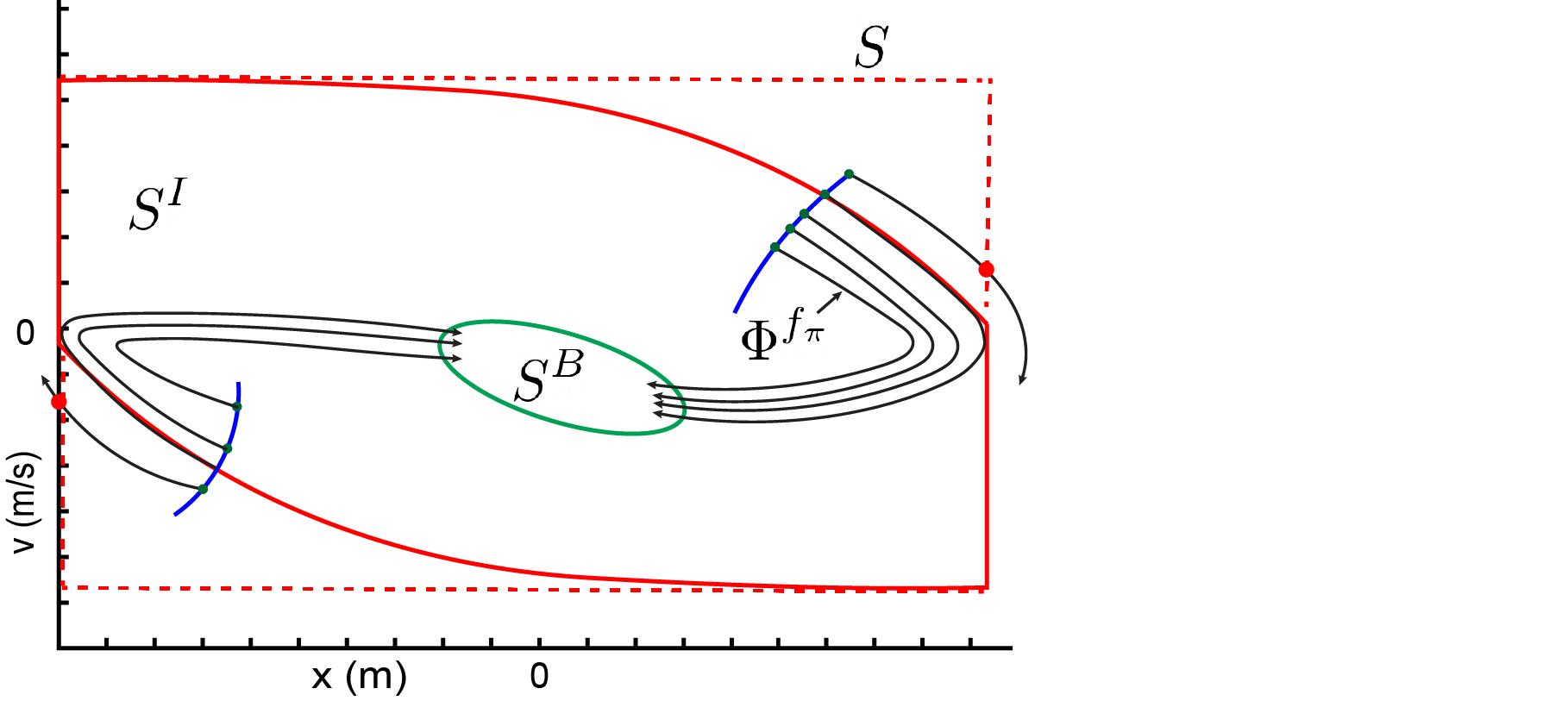}
    \caption{The flow $\Phi_{\pi}$ and the backup set $S_B$ are used to construct the invariant set $S_I$, following \eqref{eqn:backup_SI}\revf{.}}
    \label{fig:backup_set_illustrated}
\end{figure}

As shown in \cite{chen2021backup}, the set generated by a proper backup controller can approach the size of the maximal control invariant set. However, for drone racing, where the extraordinary high-speed is coupled with limited onboard computational ability, \rev{these types of CBFs} are not a realistic option. \rev{This is due to the computationally expensive gradient computation at each time-step to determine $\frac{\partial h}{\partial x}$}.

\rev{The following section introduces our proposed method, which is similar to backup-set CBFs, but does not require any gradient computations or optimization solvers.}

\section{Theory}
\label{sec:theory}
\subsection{Smooth safety filtering along the backup controller}

In the previous section, we established that a backup controller and the corresponding backup set can be used to define a control invariant set as in \eqref{eqn:backup_SI}. Let us describe the safe set $\S$ and the backup set $S_B$ as the zero super-level sets of smooth functions $h(x)$ and $h_B(x)$. We rewrite $S_I$ defined in \eqref{eqn:backup_SI} as:
\begin{equation} \label{eqn:backup_SI_h}
    S_I = 
    \left\{ 
    x \in \R^n \left| \underset{\forall \tau \in [0,T]}{\left( h\left( \Phi_{\pi}(x,\tau)\right) \geq 0 \right)} 
     \land \left( h_B\left(\Phi_{\pi}(x,T)\right) \geq 0 \right)\right.
    \right \} .
\end{equation}
Furthermore, following \cite{chen2021backup}, we know that 
\begin{equation} \label{eqn:backup_SI_min}
    S_I = \left\{ 
    x \in \R^n \left|  \underset{\tau \in [0,T]}{\min} \left \{ h\left( \Phi_{\pi}(x,\tau)\right)
     , h_B\left(\Phi_{\pi}(x,T)\right) 
     \right \} 
     \geq 0\right.
    \right \} .
\end{equation}
For ease of notation, we define the following implicit CBF.
\begin{defn}
Given a control system \eqref{eqn:dyn} with flow map $ \Phi_{\pi}(x,T)$ associated with a backup controller $u = \pi(x)$, 
the \emph{implicit control barrier function (ImCBF)} $h_I$ is defined as
\begin{equation}
h_I(x) := \underset{\tau \in [0,T]}{\min} \left \{ h\left( \Phi_{\pi}(x,\tau)\right)
     , h_B\left(\Phi_{\pi}(x,T)\right) 
     \right \}.
\end{equation}
Associated with the ImCBF is the safe set given as in \eqref{eqn:backup_SI_min}: $S_I = \left\{ 
    x \in \R^n \mid h_I(x) \geq 0 \right\}$. 
\end{defn}
The ImCBF $h_I$ is implicitly defined as it relies on the flow map of the system, which is also implicit. $h_I$ can be evaluated by forward simulating the system under the backup controller.


While invariance could be enforced using the QP \eqref{eqn:QPsimple} with the ImCBF $h_I$ as its constraint, called backup CBF QP, the implicitly defined CBF requires additional computation power to solve. Typically, the backup CBF QP is not achievable in real-time without a desktop-grade CPU. \rev{A simplification must be made in order to guarantee safety with a low-weight, low-power microcontroller suitable for a small racing drone, such as the Teensy 4.}

Instead, we choose an approach utilizing the backup controller directly. Rather than switching to the backup controller when the state is about to leave $\S_I$, our approach smoothly switches between the desired controller and the backup controller as the boundary of $\S_I$ is approached. We define the function that regulates this smooth switching the \textit{regulation function}, and denote it $k(x,h_I(x))$.

The proposed method for safety filtering using the regulation function is formalized in the following proposition.

\begin{prop}[\hspace{-.1pt}\cite{singletary2020safety}]
Consider the open-loop system \eqref{eqn:dyn} under a continuous control law given by $k \left(x,h_I(x)\right) $. If 
\begin{equation} \label{eqn:lambda}
   k (x,0) = \pi(x) \quad \forall x \in S_I,
\end{equation}
for backup control law $\pi(x)$, then the closed-loop system under $k \left(x,h_I(x)\right)$ is forward invariant, i.e. safe.
\end{prop}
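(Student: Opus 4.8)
The goal is to establish forward invariance of the set $S_I=\{x\in\R^n : h_I(x)\geq 0\}$ under the closed-loop vector field $F_{\mathrm{cl}}(x):=f(x)+g(x)\,k(x,h_I(x))$; since the $\tau=0$ term in the definition of $h_I$ gives $h_I(x)\leq h(\Phi_\pi(x,0))=h(x)$, we have $S_I\subseteq S$, so invariance of $S_I$ immediately yields $x(t)\in S$ for all $t\geq t_0$, i.e.\ safety. I would build the proof from two facts. First, $S_I$ is closed: $h_I$ is continuous because it is the minimum over the compact interval $[0,T]$ of the continuous functions $x\mapsto h(\Phi_\pi(x,\tau))$ and $x\mapsto h_B(\Phi_\pi(x,T))$, the flow $\Phi_\pi$ depending continuously on its initial condition. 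Consequently every $x\in\partial S_I$ has $h_I(x)=0$ (a point with $h_I(x)>0$ would have a whole neighborhood inside $S_I$). Second, $S_I$ is forward invariant under the backup controller $\pi$ itself: this is exactly the content of the backup-set construction \eqref{eqn:backup_SI}--\eqref{eqn:backup_SI_min}, using the standing hypotheses that $S_B\subseteq S$ and that $S_B$ is invariant under $\pi$ (and $T<\infty$).

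I would then close the argument with a Nagumo-type tangency characterization. On $\partial S_I$ we have $h_I(x)=0$, so hypothesis \eqref{eqn:lambda} gives $k(x,h_I(x))=k(x,0)=\pi(x)$ and hence $F_{\mathrm{cl}}(x)=f(x)+g(x)\pi(x)=f_\pi(x)$. Applying the necessity direction of Nagumo's theorem to the $\pi$-invariant closed set $S_I$, $f_\pi(x)$ lies in the Bouligand (contingent) tangent cone $\mathcal{T}_{S_I}(x)$ for every $x\in\partial S_I$; therefore $F_{\mathrm{cl}}(x)=f_\pi(x)\in\mathcal{T}_{S_I}(x)$ on all of $\partial S_I$ (at interior points the cone is all of $\R^n$, so the condition is vacuous). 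The sufficiency direction of Nagumo's theorem then gives forward invariance of $S_I$ under $F_{\mathrm{cl}}$, completing the proof. A more self-contained alternative avoids tangent cones: take a closed-loop solution with $h_I(x(t_0))\geq 0$, assume for contradiction it leaves $S_I$, let $t_1$ be the first exit time so that $x(t_1)\in\partial S_I$ and $h_I(x(t_1))=0$, and compute the lower Dini derivative of $t\mapsto h_I(x(t))$ at $t_1$ via Danskin's rule for a pointwise minimum of $C^1$ functions; because the closed-loop field at $x(t_1)$ equals $f_\pi(x(t_1))$ and each active inner function attains a minimum at its active $\tau$ (with the correct one-sided sign at the endpoints, using $S_B$-invariance for the $\tau=T$ terms), this Dini derivative is nonnegative, contradicting the exit.

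The main obstacle is the non-smoothness of $h_I$: it is only locally Lipschitz, not $C^1$, so one cannot substitute it into the CBF inequality \eqref{eqn:cbf:definition} and differentiate as in \eqref{eqn:QPsimple}. Both routes above are engineered around this --- the tangent-cone route by never differentiating $h_I$, and the Dini route by using the generalized chain rule (Danskin) together with first-order optimality of the active $\tau$. The remaining technical point I would pin down is the regularity needed for well-posedness: the proposition assumes only that $k$ is continuous, which secures existence of closed-loop solutions, but to conclude that \emph{every} closed-loop solution remains in $S_I$ one wants $f$, $g$, $\pi$, $k$ locally Lipschitz so that solutions are unique; with continuity alone, the Nagumo argument still yields existence of a safe solution from each $x_0\in S_I$, which is the appropriate reading of the statement in that case.
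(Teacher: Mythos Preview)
Your argument is correct and shares the paper's core idea: on $\partial S_I$ one has $h_I=0$, so the closed-loop field coincides with $f_\pi$, and since $\pi$ already renders $S_I$ invariant the trajectory cannot escape. The paper's own proof is a short first-exit-time sketch along precisely these lines --- it argues by continuity of $h_I$ that any escaping trajectory must first hit $\{h_I=0\}$, notes $f_k=f_\pi$ there, and then asserts that the system ``will remain in $S_I$ for all time'' because $\pi$ is invariant on $S_I$. Your Nagumo route (and the Dini/Danskin alternative) supplies the rigor the paper leaves implicit at that last inference: agreement of $f_k$ and $f_\pi$ at a single boundary instant does not by itself force the closed-loop trajectory to coincide with the $\pi$-trajectory thereafter, and the tangent-cone characterization is the clean way to close this. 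Your side observations --- $S_I\subseteq S$ via the $\tau=0$ term, and the distinction between mere continuity of $k$ (existence of a safe solution) versus local Lipschitz data (invariance of \emph{all} solutions) --- are not in the paper's proof but are worth keeping.
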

\begin{proof} 
By definition, $S_I$ is invariant under the control law $\pi(x)$. Therefore, $\forall x_0 \in S_I, \forall t \geq t_0, \Phi_{\pi}(x_0,t) \in S_I$\revf{.} By continuity of the flow operator, we know that $\Phi_{k}$ is continuous, and $h_I$ is continuous since it is the composition of continuous functions $h$ and $h_B$ and the flow $\Phi_{k}$. Therefore, the state must pass through $h_I(x) = 0$ before exiting $S_I$. Without loss of generality, label any such point $x_{h_0}$. Since $f_{k} = f_{\pi}$ at such a point, the system will remain in $S_I$ for all time, as $x_{h_0} \in \S_I \implies  \Phi_{\pi}(x_0,t) \in S_I, \forall t \geq t_{h_0}$.
\end{proof}

Additionally, to improve performance, we require that for $h_I(x) \gg 0$, $k \left(x,h_I(x)\right) = u_{\textrm{des}}(x)$. This way, the filter does not modify pilot's actions unless there is an eminent risk of leaving the safe set.

Other than the reduced computational complexity, there are several advantages to this filtering approach. Unlike control barrier functions, there is no notion of relative degree for the input, giving complete freedom in the choice of $h(x)$. Moreover, input bounds can be handled trivially as, by \revf{design, we choose} $\pi(x)\in\mathit{U}$, and $k$ can be constructed so that the input bound is always satisfied. Another benefit of this method is showcased in the following proposition.

\begin{thm} \label{prop:cont}
If $k(\cdot,\cdot)$ is locally Lipschitz in its arguments, and the dynamics under the backup controller $f_\pi$ are continuous and bounded on $\S$, then the resulting filter $k\left(x,h_I(x)\right)$ is locally Lipchitz continuous.
\end{thm}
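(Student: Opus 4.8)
The plan is to use the elementary fact that a composition of locally Lipschitz maps is locally Lipschitz. Since the candidate filter is $k\bigl(x,h_I(x)\bigr)$ and $k(\cdot,\cdot)$ is assumed locally Lipschitz, it suffices to show that $x \mapsto \bigl(x, h_I(x)\bigr)$ is locally Lipschitz; as the first coordinate is the identity, the whole thing reduces to proving that the ImCBF $h_I$ is locally Lipschitz. So the three ingredients I would assemble are: (i) composition of locally Lipschitz functions stays locally Lipschitz; (ii) Gr\"onwall's inequality, giving Lipschitz dependence of the flow $\Phi_\pi(\cdot,\tau)$ on the initial condition, uniformly over $\tau \in [0,T]$; and (iii) the pointwise minimum of a family of functions sharing a common Lipschitz constant is Lipschitz with that same constant.

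First I would fix a point $x_0$ and a compact neighborhood $K \ni x_0$. Because $f_\pi$ is bounded on $\S$, for $x \in K$ the trajectories $\Phi_\pi(x,\tau)$, $\tau \in [0,T]$, cannot escape a fixed ball (the displacement is bounded by $T$ times the bound on $f_\pi$), so they all lie in a compact set $\tilde K$. On $\tilde K$ the regularity of $f_\pi$ furnishes a single Lipschitz constant $L$, and the standard Gr\"onwall argument then yields $\|\Phi_\pi(x_1,\tau) - \Phi_\pi(x_2,\tau)\| \le e^{LT}\|x_1 - x_2\|$ for all $x_1, x_2 \in K$ and \emph{all} $\tau \in [0,T]$. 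The key point here is the uniformity in $\tau$: the same constant $e^{LT}$ works for every time in the horizon.

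Next, since $h$ and $h_B$ are smooth they are Lipschitz on the compact set $\tilde K$ with a common constant $L_h$, so each map $x \mapsto h\bigl(\Phi_\pi(x,\tau)\bigr)$ is Lipschitz on $K$ with constant $L_h e^{LT}$, independent of $\tau$, and likewise for $x \mapsto h_B\bigl(\Phi_\pi(x,T)\bigr)$. By ingredient (iii), the minimum defining $h_I(x) = \min_{\tau \in [0,T]}\{h(\Phi_\pi(x,\tau)), h_B(\Phi_\pi(x,T))\}$ is Lipschitz on $K$ with constant $L_h e^{LT}$; since $x_0$ was arbitrary, $h_I$ is locally Lipschitz. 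Composing the locally Lipschitz map $x \mapsto \bigl(x, h_I(x)\bigr)$ with the locally Lipschitz $k(\cdot,\cdot)$ then gives that $k\bigl(x, h_I(x)\bigr)$ is locally Lipschitz, which is the claim.

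The main obstacle is step (ii): obtaining Lipschitz --- not merely continuous --- dependence of the flow on $x_0$, uniformly over the horizon. This is really where the hypothesis on $f_\pi$ is used: ``continuous and bounded'' must be read as supplying enough regularity (local Lipschitzness, which $f_\pi = f + g\pi$ inherits for backup controllers $\pi$ of the assumed class) for solutions to be unique and to depend on the initial state in a Lipschitz manner, while boundedness is exactly what confines the trajectories to the compact $\tilde K$ on which the local Lipschitz constant becomes a genuinely uniform one. Everything after that --- Lipschitzness of smooth functions on compacta, stability of the Lipschitz property under pointwise minima, and under composition --- is routine.
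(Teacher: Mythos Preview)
Your proposal is correct and follows essentially the same outline as the paper's proof: show that $h_I$ is locally Lipschitz by combining regularity of the flow in the initial condition with smoothness of $h$ and $h_B$ and stability of Lipschitzness under minima, then compose with $k$. Your argument is in fact more careful than the paper's---you correctly invoke Gr\"onwall for Lipschitz dependence of $\Phi_\pi(\cdot,\tau)$ on the initial state uniformly in $\tau\in[0,T]$, whereas the paper appeals somewhat loosely to the fundamental theorem of calculus, and you also flag that the hypothesis on $f_\pi$ must be read as supplying local Lipschitzness (not mere continuity) for the flow estimate to go through.
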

\begin{proof}
By the definition of a CBF, $h(x)$ and $h_B(x)$ are differentiable. Moreover, the flow of the system $\Phi_{\pi}(x,t)$ is differentiable for continuous dynamics and backup controller by the second fundamental theorem of calculus, and therefore locally Lipschitz (since it is also bounded). Since Lipschitz continuity is preserved under the $\min$ operator, we have that $h_I(x)$ is locally Lipschitz continuous. Lastly, since Lipschitz continuity is preserved under compositions, we have that $k\left(x,h_I(x)\right)$ is locally Lipschitz continuous.
\end{proof}

\revf{While switching controllers can result in discontinuous inputs, the locally Lipschitz property demonstrated in Theorem \ref{prop:cont} guarantees smoother control inputs with this method.}

\subsection{Choice of regulation function}

The two requirements for our filtering function is that (i) the backup controller ${\pi(x)}$ is applied when $h_I(x) = 0$, (ii) it is Lipschitz continuous in its arguments. When $h_I(x) > 0$, we want the filter to mimic $u_{\textrm{des}}(x)$ as much as possible. To achieve this, we first choose the mixing function
\begin{equation} \label{eqn:smoothalpha}
    k \left(x,h_I(x)\right) = \lambda\left(x,h_I(x)\right) u_{\textrm{des}}(x) + (1-\lambda(x,h_I(x)))\pi(x),
\end{equation}
for $\lambda(x,h_I(x)): \R^n \times \R \to [0,1]$. Figure \ref{fig:lambda} illustrates how this mixing works. \rev{When $\lambda(x,h_I(x)) = 1$, the human operator is in complete control.} As the drone approaches the wall, the value of $\lambda(x,h_I(x))$ decreases, and it begins to slow down. Finally, near the boundary, a steady-state is reached between the operator's control action and the backup control action, and the drone stops. \revf{It is important that the backup controller attempts to move away from the boundary, so that the system does not get "stuck" near the boundary, i.e. $\lambda > 0$ always.}

\begin{figure}[t]
    \centering
    \includegraphics[width=\columnwidth]{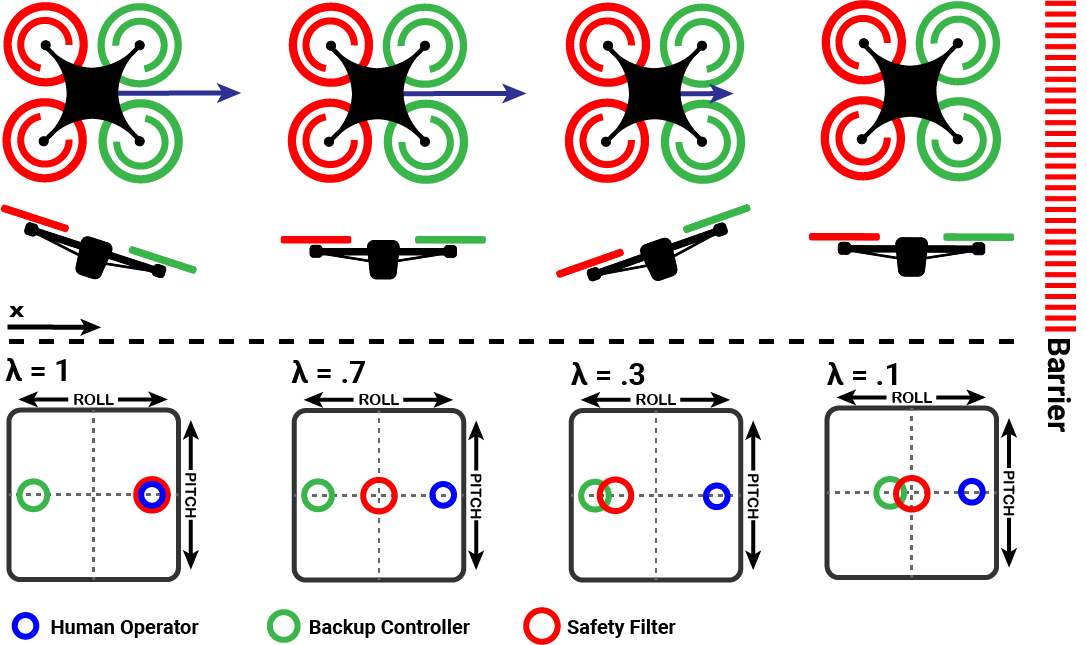}
    \caption{As the drone approaches the barrier, $\lambda$ decreases, resulting in the backup controller being utilized more.}
    \label{fig:lambda}
\end{figure}

To allow for maximum freedom of the operator, the function should exactly match $u_{\textrm{des}}(x)$ when $h_I(x) \gg 0$. One way to achieve this is by exploiting the exponential function:
\begin{equation}\label{eqn:smoothlambda}
    \lambda(x,h_I(x)) = 1-\exp\left(-\beta{h^+_I(x)}\right),
\end{equation}
where constant $\beta$ is used to tune how quickly the function $\lambda(x,h_I(x))$ decays, and $h^+_I(x) = \max (h_I(x),0)$ ensures that $\lambda(x,h_I(x)) \in [0,1]$. With this filtering controller, we get the desired behavior of $\lambda(x,h_I) \approx 1$ when $h_I(x) \gg 0$, while providing a smooth decay to 0 when as $h_I(x) \to 0$. The constant $\beta$ is used to tune how quickly the function $\lambda$ decays.
 
\subsection{Comparison to backup CBF condition}

To demonstrate the effectiveness of this method, we compare its performance to the backup set CBF approach \cite{gurriet2018online}. The two relevant metrics for this comparison are the computational times and the conservatism, as both methods provide guarantees of safety. 

While no perfect comparison can be made, due to the freedom in the selection of both $\lambda(x)$ and $k(x)$, these functions are chosen independently to result in smooth transitions when approaching the boundary of the set, while minimizing conservatism. 

\begin{figure}
    \centering
    \includegraphics[trim={1cm 0.25cm 1.5cm .65cm},clip,width=1\columnwidth]{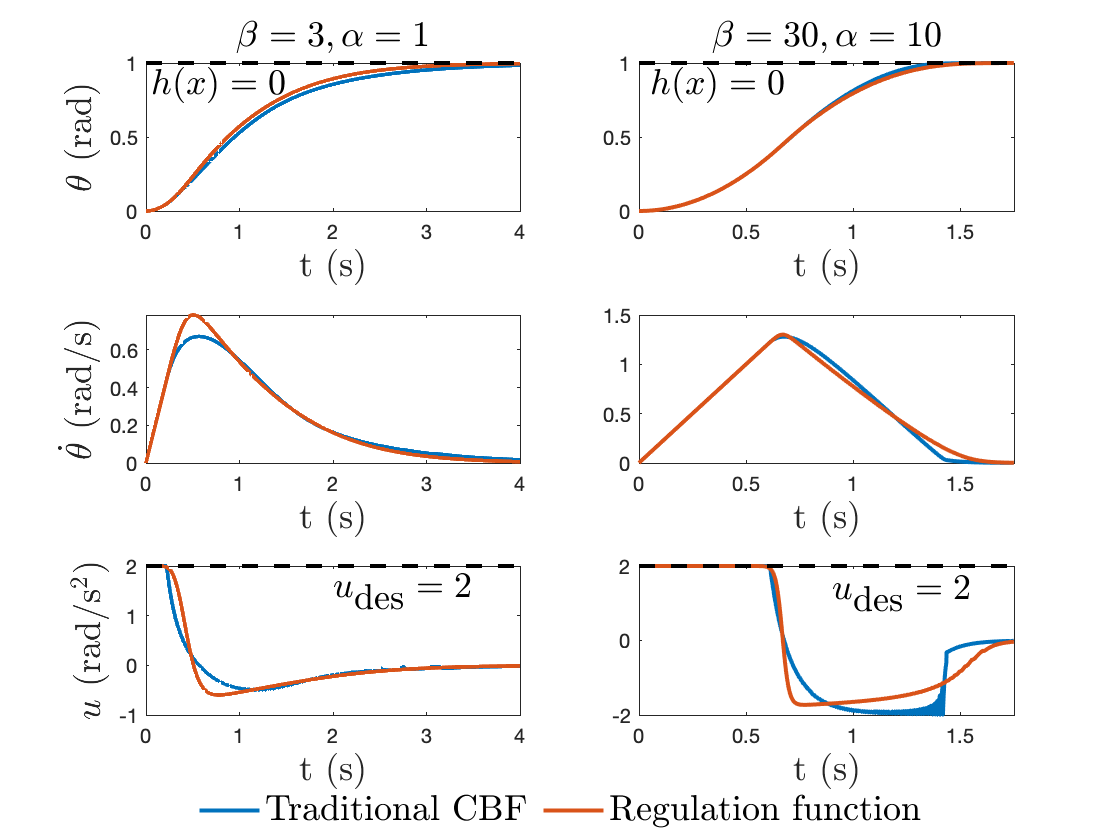}
    \caption{The filtering performance of the traditional CBF compared to the proposed regulation function, for two parameters $\beta$ in \eqref{eqn:smoothlambda} and the scalar $\alpha$ for the CBF \eqref{eqn:QPsimple}\revf{.}}
    \label{fig:cbf_comparison}
\end{figure}

\begin{example}
Consider an inverted pendulum with state $x$ dynamics $\dot{x}$
\begin{equation}
    x = 
    \begin{bmatrix}
    \theta \\
    \dot{\theta}
    \end{bmatrix} +
    \begin{bmatrix}
    0 \\
    1
    \end{bmatrix}
    u, \quad
    \dot{x} = 
    \begin{bmatrix}
    \dot{\theta} \\
    \sin(\theta)
    \end{bmatrix} +
    \begin{bmatrix}
    0 \\
    1
    \end{bmatrix}
    u,
\end{equation}
and backup control law 
\begin{equation}
\pi(x) = -Fx,
\end{equation}
which attempts to stabilize the system to the backup set 
\begin{equation}
h_B(x) = \min\left \{ \left(\frac{\pi}{12}\right)^2 - x_1^2, \delta^2-x_2^2 \right \},
\end{equation}
for a small velocity value $\delta$ chosen to be 0.1 rad/s, while staying in the set
\begin{equation}
h(x) = \min\left \{ 1 - x_1^2, 2-x_2^2 \right \}.
\end{equation}
\end{example}

For the exact formulation of the backup-set CBF, see \cite{chen2021backup}. The filtering function used here is the same as that used on the drone \eqref{eqn:smoothalpha}, to be detailed in the following section. The inverted pendulum was commanded a constant angular acceleration of 2 rad/s$^2$, and the resulting positions, velocities, and filtered inputs are displayed in Figure \ref{fig:cbf_comparison}.

Two benefits of the smooth filter can be seen in this comparison. While filtering performance is similar, the regulation function is an order of magnitude faster to evaluate. Moreover, when the gains are increased to allow a rapid approach of the boundary of the safe set, the CBF begins to oscillate near the boundary, whereas the smooth filter does not suffer from such behavior. \rev{These oscillations occur due to numerical instability of the optimization problem as the system pushes against boundary of the safe set.}

It is important to note that the optimization-based CBF still has a distinct advantage in some situations. Utilizing gradient information allows quick motion along the boundary of the set, whereas with this switching approach, the value of $\lambda(x,h_I(x))$ will be low, limiting performance near the boundary. In future work, we will explore how this ability can be utilized in a derivative-free approach. \rev{While it is not particularly important for geofencing, it would be important in the context of collision avoidance while drone-racing.}

\begin{figure*}[t]
    \centering
    \begin{subfigure}{.49\textwidth}
    \includegraphics[trim={1cm 0 1.5cm 0},clip,width=\columnwidth]{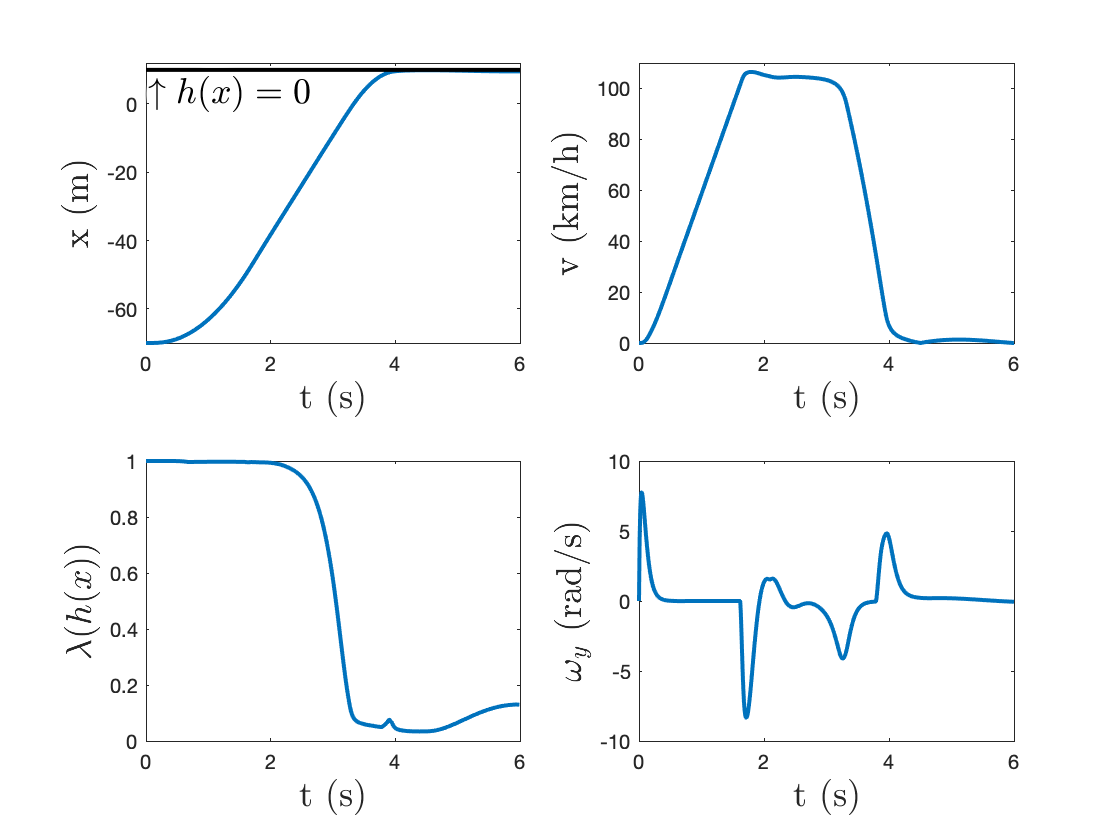}
    \label{fig:sim_x}
    \caption{Horizontal barrier active at 107 km/h. }
    \end{subfigure}
    \rulesep
    \begin{subfigure}{.49\textwidth}
    \includegraphics[trim={1cm 0 1.5cm 0},clip,width=\columnwidth]{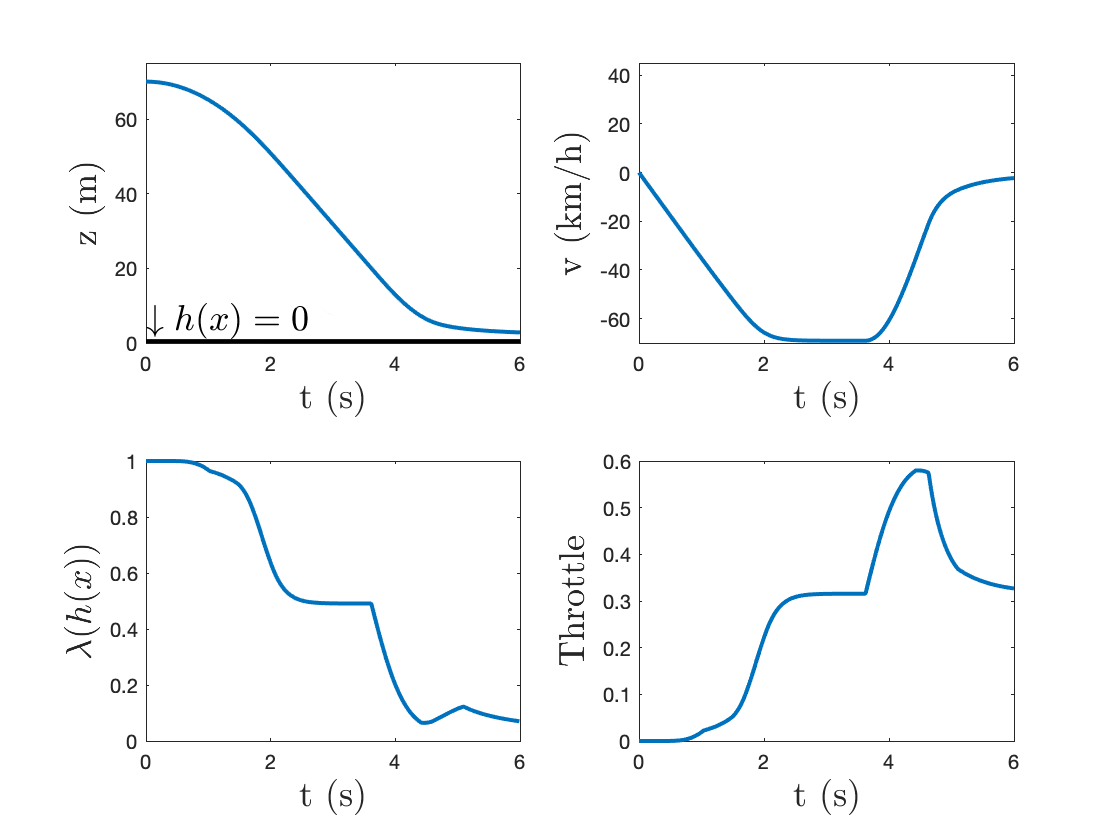}
    \label{fig:sim_z}
    \caption{Vertical barrier active at -70 km/h. }
    \end{subfigure}
    \caption{Simulation results of the two primary hardware test cases. On the left, the drone accelerates towards the barrier at $x_w = 10$ m. On the right, the drone free-falls from 70 m towards the barrier \rev{at $z_w = 0.5$ m.}}
    \label{fig:barriers_sim}
\end{figure*}

\section{Modeling and Implementation}
\label{sec:implementation}
\subsection{Modeling the Drone and Onboard Flight Controller}

Flight controllers on modern racing drones are able to track desired angular rates extremely well. With the availability of low-cost, high-speed electronic speed controllers (ESCs) and rate gyros, state-of-the-art controller\revf{s} can track angular rate\revf{s} at control frequencies of 8 kHz.
\rev{We utilize this by wrapping our controller around the closed-loop system of the drone with the onboard flight controller.} Therefore, rather than the control inputs being the torques of the four motors, we command throttle and angular rates. This choice of architecture greatly simplifies the task of modeling the drone dynamics, and allows us to better filter the system in a way that minimizes the impact on the pilot. Moreover, the same filter can be applied to different drones with different dynamics, including those with six or eight rotors.

By modeling the response of the system to desired angular rate commands, the proposed method does not rely on perfect angular rate tracking. Through this, we also account for any delay in the filter stemmed from communication between the sensors and the onboard flight controller.

The drone and flight controller system is modelled as rigid body motion in the Special Euclidean Group in 3 dimensions, SE$(3)$. The state-space model $x \in \R^{13}$ is chosen to be
$  x = [
    p_w ,
    q ,
    v_w ,
    \omega_b]^T
    $
where $p_w = [x,y,z]^T$ is the position in the world frame, $v_w = [v_x,v_y,v_z]^T$ are the world-frame velocities, $q$ is the quaternion representation of the orientation with respect to the world frame, and $\omega_b = [\omega_x,\omega_y,\omega_z]^T$ are the body-frame angular velocities.

To model the system's response to an angular rate command, we set the derivative of the angular rates to
\begin{equation} \label{eqn:omega_model}
    \dot{\omega} = C(x)(\omega_{\textrm{des}}-\omega),
\end{equation}
where $C(x):\R^n \to \R_+$ is a (potentially state-dependent) function that determines how quickly the desired rates are tracked. For a well-tuned racing drone with minimal filter delay, its value should be on the order of 50, and can be treated as state-independent.

Lastly, to map the throttle command to thrust, we fit a second-order polynomial with data from the accelerometer and GPS. While this mapping will be dependent on the voltage, the inclusion of an integral term in the altitude controller is generally sufficient to eliminate any drift.

\subsection{Safe sets and backup controllers}

The primary goal of this work is to constrain the position of the drone inside a large polytope in the 3D space, inside of which the pilot has almost complete control, but is unable to leave.
To this end, we define the safe set
\begin{equation}
    h(x) = \min \left\{r_x^2 - (x-x_c)^2, r_y^2 - (y-y_c)^2, r_z^2 - (z-z_c)^2\right\},
\end{equation}
which is positive inside of a box with side lengths $(r_x,r_y,r_z)$ centered at $(x_c,y_c,z_c)$, and negative outside.

The backup controller $\pi(x)$ is a velocity controller on SE(3), inspired by \cite{5717652}. The backup controller attempts to bring the drone to zero velocity in the $x,y,z$, but has one other goal which is very important: to bring the drone away from the boundary if it is too close. This is critical, as if the drone were to simply stop at the boundary,
the pilot would be stuck at the edge of the safe set due to $\lambda(x)$ approaching 0.
To achieve this, we set the desired velocity to 
\begin{equation}
    v_x = \begin{cases}
    0 & r_x^2 - (x-x_c)^2 \geq \delta, \\
    -(\delta - r_x^2 + (x-x_c)^2) & \textrm{otherwise}. \\
    \end{cases}
\end{equation}
Under this backup controller, the drone will move a distance $\delta$ from the boundary before stopping. The desired velocities are identical for $y$ and $z$ directions.

Finally, the backup set $S_B$ is defined by the function
\begin{equation}
    h_B = -\sqrt{v_x^2+v_y^2+v_z^2}+\epsilon.
\end{equation}

This backup set ensures that the drone is able to slow itself to a speed of $\epsilon$, chosen to be $0.1$ m/s, \rev{thus guaranteeing that the drone is able to stop before hitting the boundary.}

\subsection{Modification of safety filter for very high speeds}

Modifications must be made to the function $\lambda(x,h_I(x))$ to work well at very high-speeds. This is because $\beta$ must be made large to have smooth breaking at high speeds, which would make the filter overly conservative near the boundary at low speeds. This can be fixed simply by scaling the value of $h_I(x)$ by the inverse of the velocity towards the barrier. The safety filtering function used by the drone is

\begin{equation}\label{eqn:smoothlambda_new}
    \lambda(x,h_I(x)) = 1-\exp\left(-\frac{\beta h_I(x)^+}{v_\perp^{+}}\right),
\end{equation}
where $v_\perp$ is the velocity in the direction of the barrier. 
 

\subsection{Simulation}

Before testing the barrier functions on hardware, we first devise the test cases in simulation. While the safety filters are the same in simulation and on hardware, the hardware is operated by a human pilot, while the sim has its own desired controllers, so some discrepancies will arise because of this.

Two primary test cases were run in simulation, a high-speed horizontal test, with the goal of successful filtering at 100 km/h, and a free fall from 70 m. The results of the horizontal simulation are shown in Fig. 5a. The drone accelerates to a maximum speed of 107 km/h before being forced to stop. The minimum distance to barrier was 0.12 m.
The free fall simulation was also successful: the drone accelerated to a top speed of 70 km/h downwards, before reaching a hover at a distance of 1.2 m above the barrier.

\section{Results}
\label{sec:results}
\subsection{Hardware setup}

Our quadrotor is built on a Chimera 7'' frame with four iFlight XING X2806.5 1300 KV brushless motors, a T-Motor F55A Pro II 4-in1 ESC, a MAMBA BASIC F722 Flight Controller (FC), a Teensy 4.1 microcontroller, a Vectornav VN-200 IMU+GPS, a FrSky R-XSR receiver, a DJI FPV air unit, and a Cadex FPV camera. We use a FrSky QX7 radio to send desired angular rates commands to Teensy microcontroller through the receiver. The VN-200 fuses GPS and IMU data with a built in extended Kalman filter. This data is sent to the Teensy as navigation data at 400 Hz. Using this data, the microcontroller then modifies these angular rates commands with the regulation function and then forwards them on to the FC. The FC runs betaflight, an open source software, to track the commanded angular rates. The PID loop runs at the gyro update rate at 8 kHz. The FC sends digital commands to the ESC using DSHOT600 at the same 8 kHz. FPV video is digitally streamed with a end to end latency of 25 ms from the DJI FPV air unit to DJI FPV goggles, which are worn by the operator.


\begin{figure}
    \centering
    \includegraphics[width=\columnwidth]{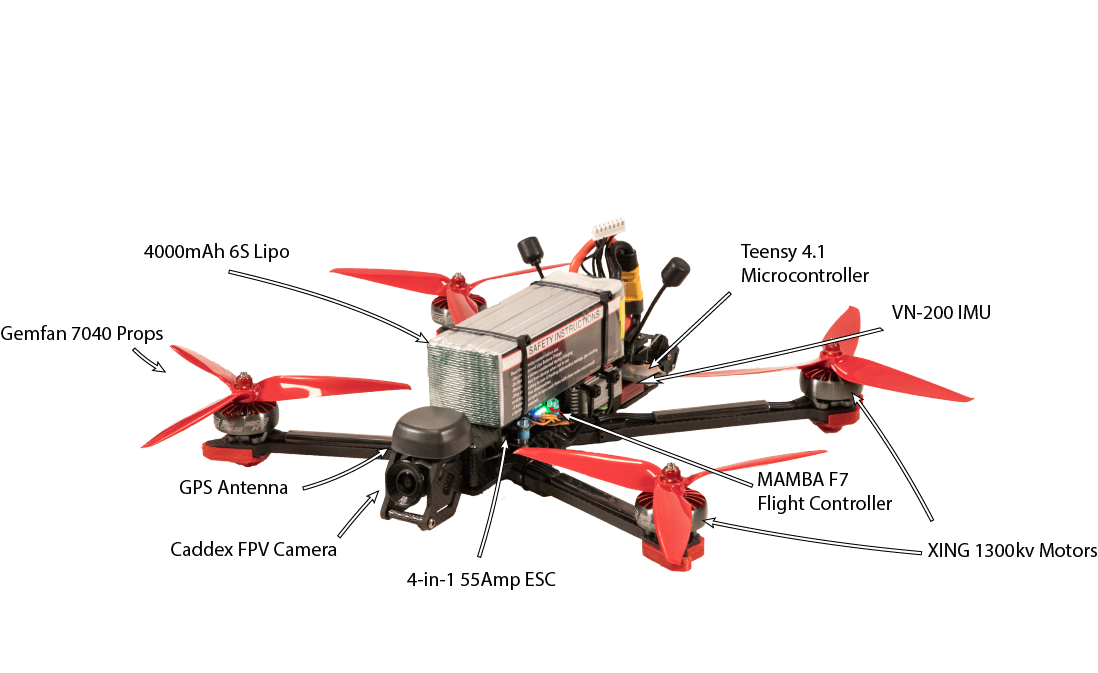}
    \caption{The 7" racing drone used for experiments.}
    \label{fig:hardware}
\end{figure}

\subsection{Filter software and execution}
To simplify the transition from \revf{simulation to hardware}, the barrier functions are first generated in MATLAB for simulation, and then codegen is used to create C++ code that will run on hardware.
All code used to generate and run the barrier functions on a Teensy 4.1 can be found at \url{https://github.com/DrewSingletary/racing_drone_geofencing}. This codebase also includes the interface for our specific receiver and flight controller, but this could be modified to fit other drone and radio configurations.

The execution time of the filter on the Teensy is approximately 350 $\mu$s. The algorithm runs at the update rate of the navigation data, which is 400 Hz. 


\subsection{Outdoor flight tests}

A large number of flight tests were done to verify the safety guarantees provided by our method. We emphasize two specific examples, but several more flight tests are displayed in Figure \ref{fig:barriers_hardware}.

\noindent \textbf{Test 1: Horizontal barrier at 104 km/h}. For this test, the pilot commanded the drone to head north at high speeds. The active component of the barrier was 40 m north of the initial position. The drone was able to reach a top speed of 104 km/h, before beginning to break at a distance of 15 m from the barrier. The pilot attempted to push the drone past the barrier, but was stopped at a minimum distance of 1.7 m from the barrier. Due to this, the pilot was then able to safely move away from the restricted airspace. 

Figure 7a showcases the results of the experiment. The results agree strongly with the simulation data in Fig. 5a, despite the human piloting commanding different desired inputs than the simulation. This is, in part, due to the very accurate angular rate tracking showcased at the bottom right of Fig. 7a. While there is a slight delay in this tracking, this is properly modeled in \eqref{eqn:omega_model}, thus it does not affect our ability to guarantee safety. 

Figure 7c shows the drone throughout this maneuver, highlighted in blue. Above this, the orientation of the drone at different snapshots are visualized. As shown, the drone reaches an angle of nearly 90$^{\circ}$ while breaking. 

\noindent \textbf{Test 2: Free fall from 70 m}. At the beginning of this test, the pilot was flying at an altitude of 70 m and then sends no commands, mimicking a loss of radio connection. The barrier was chosen to be a distance of 0.5 m from the ground. After a short free-fall, reaching a vertical velocity of -60 km/h, the safety filter stabilizes the drone before coming to a stop at a distance of 1.8 m above the ground. 

The data from this flight is visualized in Fig. 7b. Rather than plotting desired angular rates, we now showcase the desired throttle of the drone sent from the user compared to the throttle produced by the safety filter. Despite the pilot commanding no throttle for the entire duration of the descent, the drone is able to smoothly recover before crashing.

Again, when comparing this data to the simulation in Fig. 5b, notice the extremely similar results. In fact, the only major discrepancy, which is the fact that the simulation reached a speed of 10 km/h faster downwards than the drone, can be easily explained by a lack of drag in the simulation model. This did not occur during the horizontal tests, as the velocity controller is able to correct for this drag in flight.

\noindent \textbf{Testing for reliability and consistency.} Figure 8 highlights the reliability and consistency of this method in the application of geofencing. Four separate flights are plotted, two of which engage the horizontal barrier whereas two engage the vertical barrier. In each flight, the barrier is engaged two to four times, and every time, safety is maintained, and $\lambda$ never reaches zero, meaning the pilot never lost complete control of the drone for any period of time.

\begin{figure*}[t]
    \centering
    \begin{subfigure}{.49\textwidth}
    \includegraphics[trim={1cm 0 1.5cm 0},clip,width=\columnwidth]{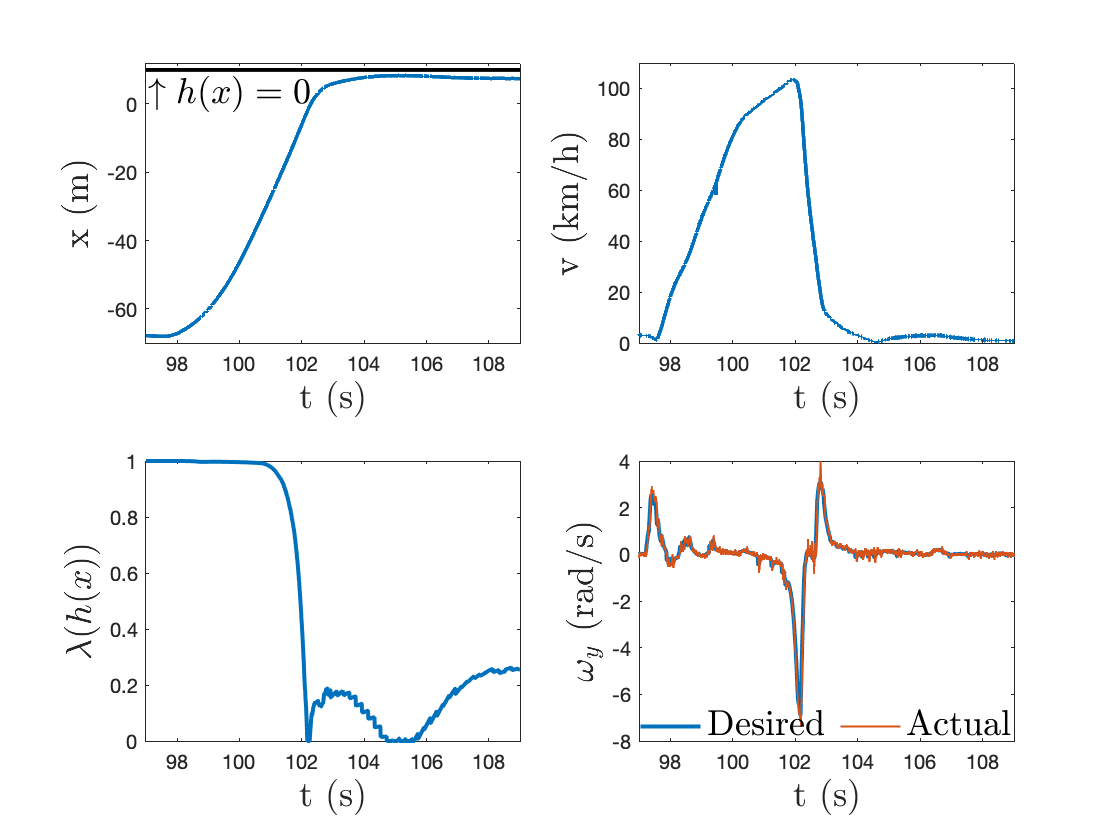}
    \caption{Horizontal barrier active at 105 km/h. }
    \end{subfigure}
    \rulesep
    \begin{subfigure}{.49\textwidth}
    \includegraphics[trim={1cm 0 1.5cm 0},clip,width=\columnwidth]{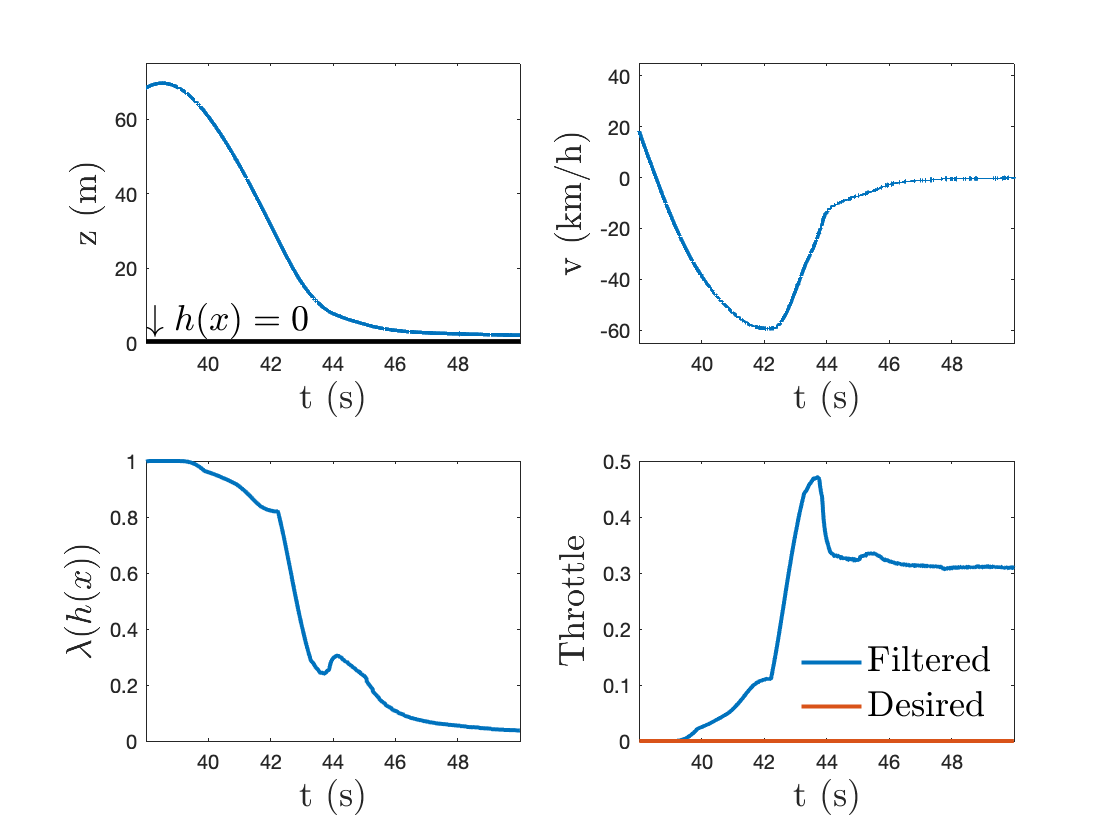}
    \caption{Vertical barrier active at -60 km/h. }
    \end{subfigure}
    \begin{subfigure}{1\textwidth}
    \includegraphics[width=0.586625\columnwidth]{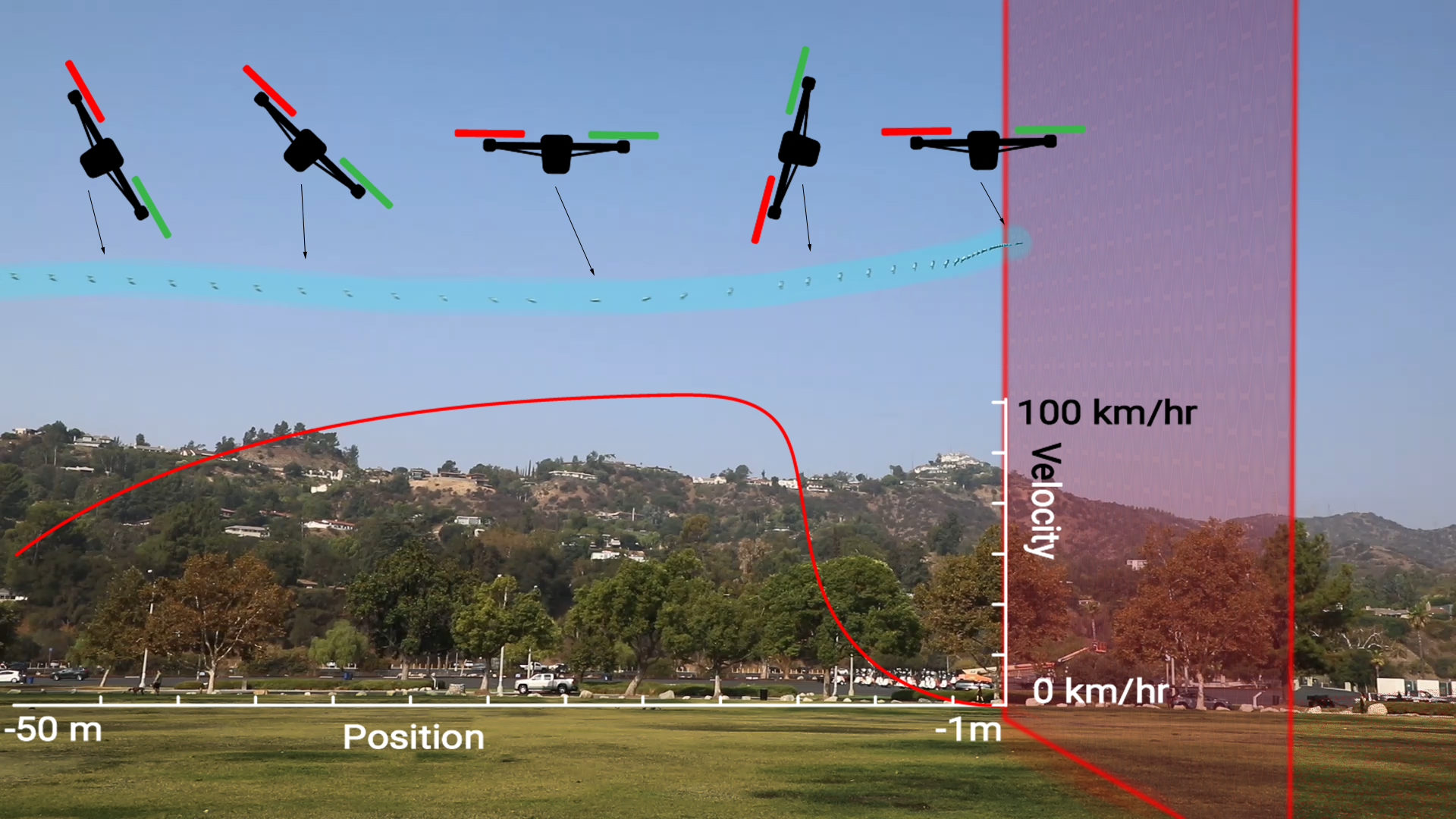}
    \rulesep
    \includegraphics[width=0.363375\columnwidth]{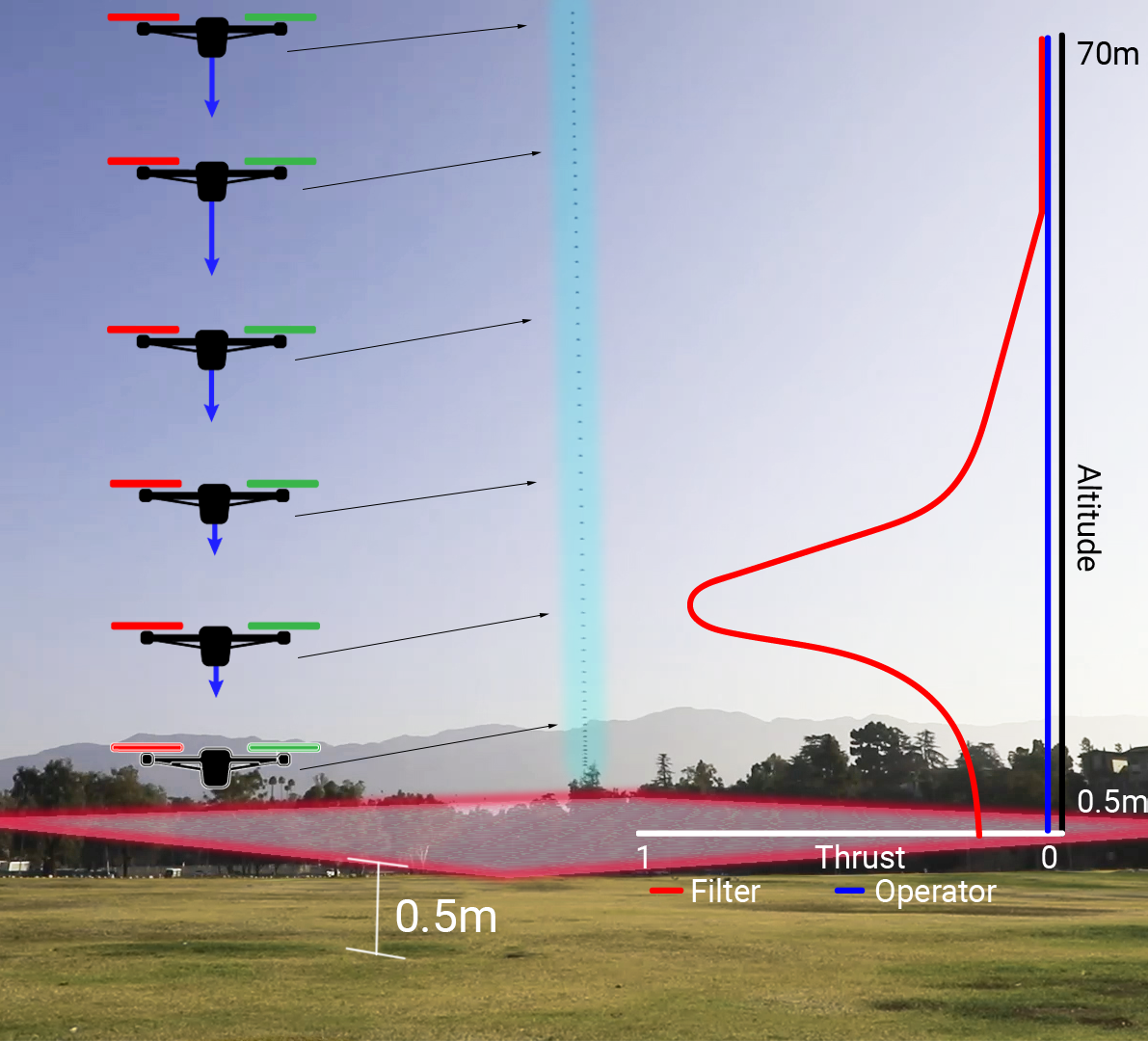}
    \caption{Actual drone flight during the two showcased example is highlighted in blue.}
    \end{subfigure}
    \caption{Two highlighted examples of geofencing with the high-speed racing drone. The video can be found at \url{https://youtu.be/_sCoAdBrgJw}}.
    \vspace{.5cm}
    \hrule
    \vspace{.5cm}
    \begin{subfigure}{.49\textwidth}
    \includegraphics[trim={2.9cm 0 1.3cm 0},clip,width=.99\columnwidth]{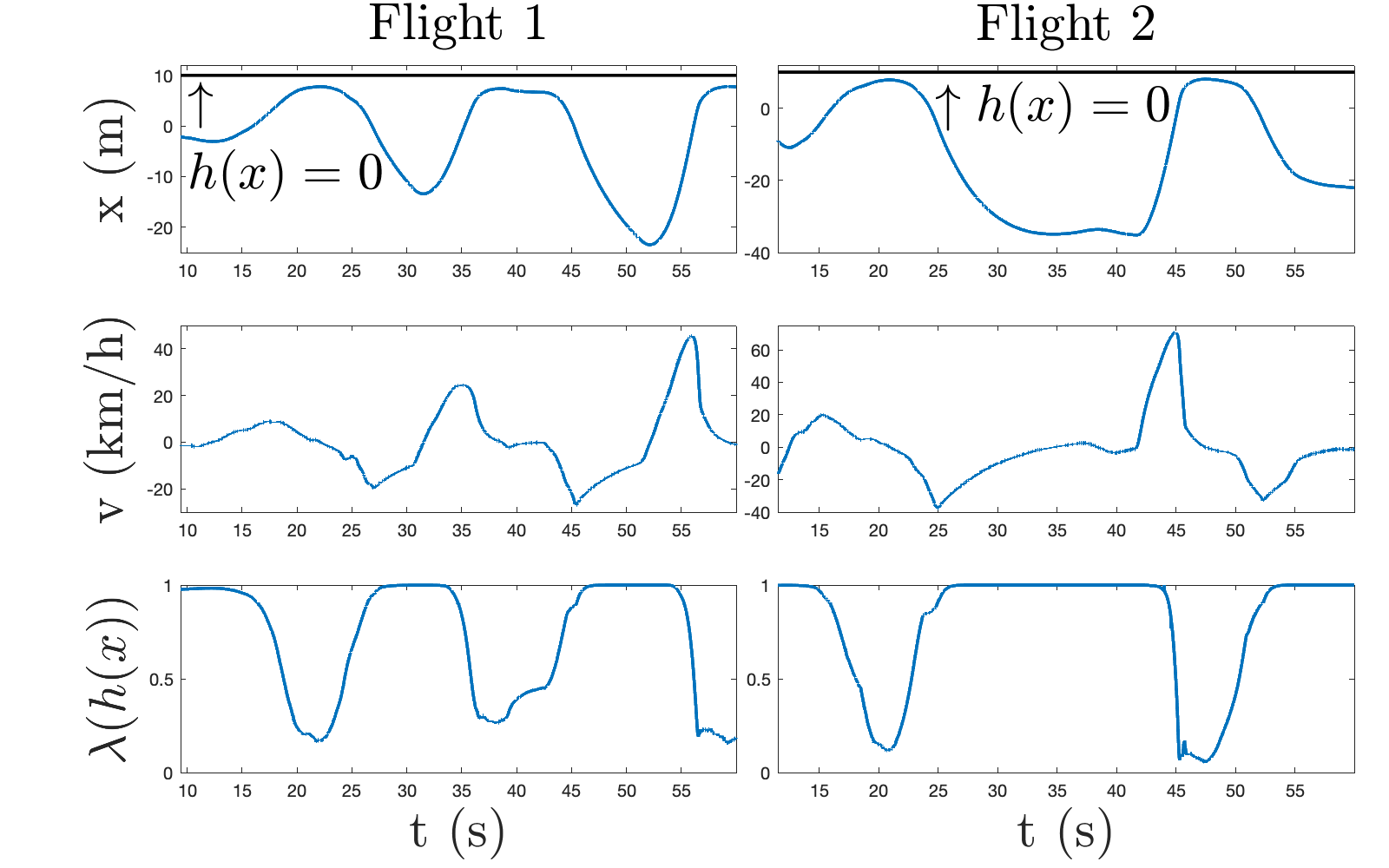}
    \caption{Horizontal barrier active}
    \end{subfigure}
        \rulesep
    \begin{subfigure}{.49\textwidth}
    \includegraphics[trim={2.9cm 0 1.3cm 0},clip,width=.99\columnwidth]{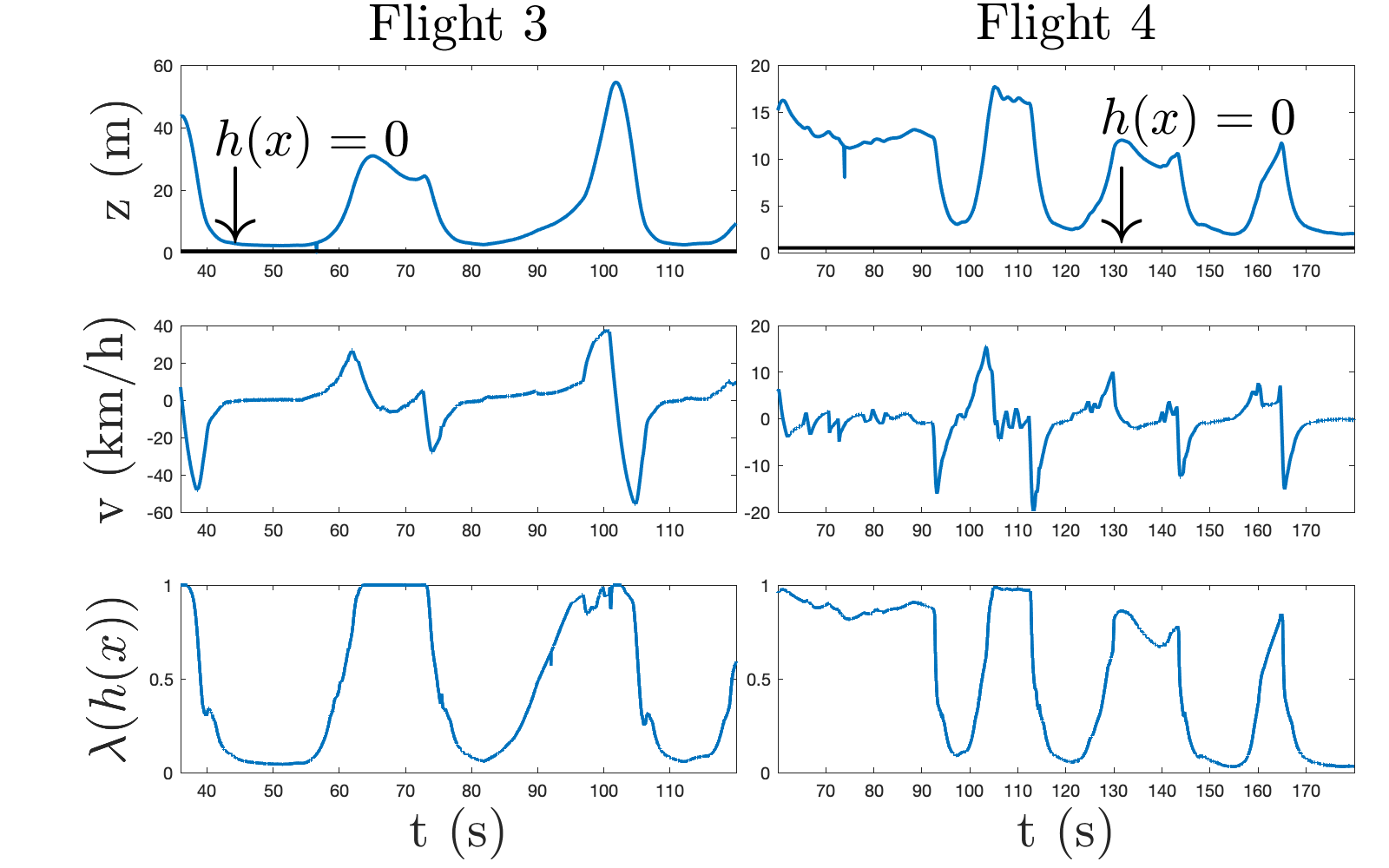}
    \caption{Vertical barrier active. }
    \end{subfigure}
    \caption{Four separate experimental runs where the drone is commanded to approach the barrier several times. $\lambda$ never reaches zero, meaning the pilot always has some amount of control, and the drone never leaves the defined safe set.}
    \label{fig:barriers_hardware}
\end{figure*}

\section{CONCLUSION}
\label{sec:conclusion}
In this work, we showcased a novel safety filter intended to guarantee safe, high speed flight in the presence of a human operator. This method required no offline computations, and was implemented on a small microcontroller aboard a 7" racing drone. The filter was successful at keeping the drone inside of a desired region at speeds upwards of 100 km/h, and was easily able to recover from an 70 m free fall. Future work will consist of adding vision into the loop to dynamically construct the safe sets, as well as multi-robot collision avoidance at high-speeds for drone races.

\renewcommand{\baselinestretch}{0.98}
\bibliographystyle{IEEEtran}
\bibliography{refs.bib}

\end{document}